\documentclass[12pt]{article}

\usepackage[utf8]{inputenc}
\usepackage[T1]{fontenc}
\usepackage{amsmath,amssymb,amsfonts,amsthm}
\usepackage{mathtools}
\usepackage{bm}
\usepackage[margin=1.24in]{geometry}

\usepackage{booktabs}
\usepackage{graphicx}
\usepackage{url}
\usepackage{algorithm}
\usepackage{algpseudocode}
\usepackage{microtype}
\usepackage{xcolor}
\usepackage{subcaption}
\usepackage{float}
\usepackage{tikz}
\usetikzlibrary{arrows.meta, positioning}
\usepackage{tabularx}

\usepackage[colorlinks=true, citecolor=blue, linkcolor=blue, urlcolor=blue]{hyperref}
\usepackage[authoryear,round]{natbib}

\newtheorem{theorem}{Theorem}[section]

\newtheorem{proposition}[theorem]{Proposition}

\newcommand{\myparagraph}[1]{\medskip\noindent\textbf{#1}}

\title{An information-geometric framework for mapping maximum potential biodiversity}
\author{Shinto Eguchi\thanks{The Institute of Statistical Mathematics, Tokyo.
 \ Email: \texttt{eguchi@ism.ac.jp}}}
\date{June, 2026}
\begin{document}

\maketitle

\begin{abstract}
Biodiversity measures are often used descriptively: one computes a diversity
index from an observed or estimated community composition and maps the resulting
values across space.  Conservation planning, however, also requires a
site-specific benchmark against which the observed community can be compared.
This chapter develops an information-geometric framework for such
\emph{potential diversity} and the associated \emph{diversity gap}.  The central
object is a pair of probability vectors on the species simplex: an observed or
realized composition \(p^{\mathrm{obs}}\), and a potential composition
\(p^{\mathrm{pot}}\) obtained by a constrained variational principle.  The gap is
then defined by comparing a diversity functional at these two compositions.
The framework is developed for both Hill-type diversity, which measures
abundance and evenness, and Rao's quadratic entropy, which incorporates
trait, phylogenetic, or ecological dissimilarities among species.  A spatial point-process interpretation clarifies how local ecological
capacities can be defined before passing to the simplex.  Escort constraints,
capacity constraints, and divergence projections then provide a unified way to
define nontrivial benchmarks beyond the uniform distribution.
The resulting formulation separates two distinct questions: how diverse a
community is, and how far it is from a locally admissible potential benchmark.
It also connects the ecological idea of dark diversity with a continuous,
abundance-weighted comparison on the probability simplex.  We also outline a
dynamic extension in which capacities, species migration, and climate-driven
shifts vary over time.  Empirical implementation with large-scale
citizen-science biodiversity data and trait databases is left for future work.

\end{abstract}

\bigskip
\noindent
{Keywords}:
{Hill numbers};
{spatial biodiversity mapping};
{species distribution models};
{divergence projection}.


\newpage
\section{Introduction}

Biodiversity measures are usually descriptive.  Given a local community
composition \(p=(p_1,\ldots,p_S)\) on \(S\) species, one evaluates species
richness, Shannon entropy, Simpson diversity, a Hill number, or another
diversity index.  The resulting scalar is then interpreted as the diversity of
the community.  This descriptive use is indispensable, but it is not sufficient
for ecological planning.  A low-diversity site may be intrinsically constrained
by climate, habitat, or dispersal, whereas another site with the same observed
diversity may have a much larger ecological potential.  Conversely, a site with
moderate observed diversity may already be close to the best composition
allowed by its local constraints.

This chapter develops a framework for comparing \emph{realized diversity} with
\emph{potential diversity}.  The basic idea is to replace a single diversity map
by a pair of objects:
\[
p^{\mathrm{obs}}(x)
\quad\hbox{and}\quad
p^{\mathrm{pot}}(x),
\]
where \(p^{\mathrm{obs}}(x)\) denotes the observed or estimated community
composition at a location \(x\), and \(p^{\mathrm{pot}}(x)\) denotes a
benchmark composition obtained from an ecological constraint set.  Given a
diversity functional \( D \), the associated gap is
\[
\Delta_{ D }(x)
=
 D (p^{\mathrm{pot}}(x))
-
 D (p^{\mathrm{obs}}(x)).
\]
The purpose of this quantity is not to replace ordinary diversity indices, but
to add a benchmark-relative layer: it asks how much diversity is unrealized
under the same local admissible conditions.

The construction is information-geometric.  A potential composition is defined
by a constrained variational problem on the probability simplex, in the spirit
of maximum diversity distributions and their information-geometric
interpretation \citep{eguchi2025information}.
Escort distributions are useful because they make the
constraint depend on the order of diversity and avoid trivial uniform
benchmarks.  This is particularly important for the Hill-number family, where
different values of the order \(q\) emphasize richness, evenness, or dominance
in different ways.

The chapter has two main targets.  The first is the Hill-number family
\citep{Hill1973,Jost2006,ChaoChiuJost2014}, which unifies richness, Shannon
diversity, and Simpson diversity as effective numbers of species.  Hill-type
gaps quantify abundance and evenness deficits.  The second is Rao's quadratic
entropy \citep{Rao1982}, which incorporates pairwise dissimilarities between
species.  Rao-type gaps quantify losses of trait, phylogenetic, or ecological
spread that are invisible to abundance-only diversity measures.

The chapter is theoretical.  No empirical case study is included.  This is a
deliberate choice.  Applying the framework to large citizen-science datasets,
such as spatial occurrence records, requires a separate treatment of sampling
bias, observation effort, temporal variation, and ecological interpretation.
Similarly, applying Rao-type gaps to functional diversity requires a careful
choice of trait databases and dissimilarity matrices.  These issues are
important, but they are better treated in a dedicated empirical study.  The
present chapter focuses on the mathematical foundation.

\myparagraph{Relation to dark diversity.}
The proposed gap is related to the ecological idea of dark diversity, namely
the set of species that are absent locally but belong to a regional species
pool \citep{partel2011dark}.  
The present formulation replaces a purely binary
presence--absence comparison by a benchmark-relative comparison on the
probability simplex.  For small values of the Hill order, the gap is close in
spirit to a richness deficit, whereas for larger values it reflects unrealized
evenness or dominance structure.  Thus potential diversity may be viewed as a
quantitative extension of dark diversity from species lists to
abundance-weighted community compositions.  This interpretation is only one
possible ecological reading of the gap; the mathematical definition remains the
variational contrast in \eqref{eq:gap-general}.

\subsection*{Organization}

Section~\ref{sec:functionals} reviews diversity functionals, including Hill
numbers, Shannon entropy, Simpson diversity, and Rao's quadratic entropy.
Section~\ref{sec:potential-gap} defines potential diversity and diversity gaps
in a general form.  Section~\ref{sec:ig-construction} develops the
information-geometric construction using divergence projection, escort
constraints, and capacity constraints.  Section~\ref{sec:hill-potential}
specializes the framework to Hill-type potential diversity, including a
computational form and a Shannon-type hierarchical decomposition.  Section~\ref{sec:rao-potential}
develops the Rao-type counterpart.  Section~\ref{sec:dynamic} outlines a
dynamic extension for time-dependent capacities, migration, and warming-driven
community shifts.  Section~\ref{sec:conclusion} concludes with comments on
future empirical work.

\section{Diversity functionals}
\label{sec:functionals}

Let
\[
\Delta_{S-1}
=
\left\{
p=(p_1,\ldots,p_S): p_s\ge 0,\ \sum_{s=1}^S p_s=1
\right\}
\]
be the probability simplex over \(S\) species.  A diversity functional is a
mapping
\[
 D :\Delta_{S-1}\to \mathbb R
\]
that assigns a scalar measure of diversity to a community composition.  The
choice of \( D \) determines what aspect of biodiversity is emphasized.

\subsection{Hill numbers}

For \(q\ge 0\), \(q\neq 1\), the Hill number of order \(q\) is
\begin{equation}
\label{eq:hill}
D_q(p)
=
\left(\sum_{s=1}^S p_s^q\right)^{1/(1-q)}.
\end{equation}
The continuous extension at \(q=1\) is
\begin{equation*}
D_1(p)
=
\exp\left\{-\sum_{s=1}^S p_s\log p_s\right\}.
\end{equation*}
Thus \(D_0\) is species richness, \(D_1\) is the exponential of Shannon entropy,
and \(D_2\) is the inverse Simpson concentration:
\[
D_0(p)=|\{s:p_s>0\}|,\qquad
D_2(p)=\left(\sum_{s=1}^S p_s^2\right)^{-1}.
\]
Hill numbers have the interpretation of effective numbers of species.  The
order \(q\) controls the sensitivity to rare and dominant species.  Small
values of \(q\) emphasize rare species and richness, while large values of \(q\)
emphasize dominant species.

It is often convenient to work with the monotone transform
\begin{equation*}
S_q(p)
=
\frac{1-\sum_{s=1}^S p_s^q}{q-1},
\qquad q\neq 1,
\end{equation*}
which is the Tsallis entropy.  For a fixed \(q\), maximizing \(D_q(p)\) is
equivalent to maximizing \(S_q(p)\), because \(D_q(p)\) is a monotone transform of
\(\sum_s p_s^q\) with the appropriate orientation.

\subsection{Shannon and Simpson diversity}

Shannon entropy is
\begin{equation*}
H(p)
=
-\sum_{s=1}^S p_s\log p_s.
\end{equation*}
It yields the effective diversity \(D_1(p)=\exp\{H(p)\}\).  Shannon entropy is
strictly concave on the interior of the simplex and plays a central role in
maximum entropy and information projection.

Simpson concentration is
\[
C_2(p)=\sum_{s=1}^S p_s^2,
\]
and Simpson diversity is often represented by
\[
1-C_2(p)
\quad\hbox{or}\quad
D_2(p)=C_2(p)^{-1}.
\]
The inverse form \(D_2\) has the same effective-number interpretation as the
other Hill numbers.  It is sensitive to dominance and therefore useful for
diagnosing whether a community is monopolized by a small number of abundant
species.

\subsection{Rao's quadratic entropy}

Hill numbers depend only on the abundance vector \(p\).  Rao's quadratic
entropy additionally uses a dissimilarity matrix
\[
A=(a_{st})_{s,t=1}^S,
\qquad
a_{st}\ge 0,\quad a_{ss}=0,\quad a_{st}=a_{ts},
\]
where \(a_{st}\) measures the trait, phylogenetic, functional, or ecological
dissimilarity between species \(s\) and \(t\).  Rao's quadratic entropy is
\begin{equation*}
Q(p)
=
\sum_{s=1}^S\sum_{t=1}^S p_s p_t a_{st}
=
p^\top A p.
\end{equation*}
It is the expected dissimilarity between two species drawn independently from
the community composition \(p\).  If \(a_{st}=\|z_s-z_t\|^2\) for trait vectors
\(z_s\), then
\begin{equation*}
Q(p)
=
2\left\{
\sum_{s=1}^S p_s \|z_s\|^2
-
\left\|\sum_{s=1}^S p_s z_s\right\|^2
\right\}.
\end{equation*}
Thus Rao entropy is twice the trait variance of the community.

Rao entropy and Hill numbers answer different questions.  Hill numbers measure
how evenly probability mass is distributed across species labels.  Rao entropy
measures how widely the probability mass is spread in a dissimilarity space.
Two communities can have the same Hill diversity and different Rao diversity if
one community contains functionally similar species and the other contains
functionally distant species.

\section{Potential diversity and diversity gap}
\label{sec:potential-gap}

Let \(x\) denote a spatial location, habitat unit, or planning unit.  We assume
that an observed or realized community composition
\[
p^{\mathrm{obs}}(x)\in \Delta_{S-1}
\]
is available.  This vector may be obtained from direct counts, abundance
estimates, model-based predictions, or any other calibrated representation of
local composition.  The theory below does not depend on the particular
estimation method.

\subsection{A latent-intensity interpretation of local capacity}

Although the main framework is formulated on the probability simplex, it is
useful to give a probabilistic interpretation of the local capacity that enters
the construction.  Let \(\mathcal A\subset\mathbb R^d\) be a spatial domain and
let \(s=1,\ldots,S\) index the species.  We write \(\lambda_s(x)\) for the
latent ecological intensity of species \(s\) at location \(x\).  This intensity
represents the potential spatial occurrence rate before observational thinning,
detection error, and sampling effort are taken into account.  It is not a
species-composition probability.

A standard point-process interpretation is that the latent ecological process
of species \(s\) is an inhomogeneous Poisson point process with intensity
\(\lambda_s(x)\).  Thus, for any measurable region \(B\subset\mathcal A\),
\begin{equation*}
N_s(B)
\sim
\operatorname{Poisson}
\left\{
\int_B \lambda_s(u)\,du
\right\}.
\end{equation*}
This formulation is closely related to species distribution models and to the
well-known links among presence-only likelihoods, MaxEnt, Poisson point-process
models, and logistic regression approximations
\citep{ElithLeathwick2009,PhillipsAndersonSchapire2006,ElithEtAl2011,WartonShepherd2010,FithianHastie2013}.

Fix a local spatial scale \(\delta>0\), and write
\[
\mathcal N_\delta(x)
=
\{u\in\mathcal A:\|u-x\|\le \delta\}.
\]
This neighbourhood scale is used to define the observed local
composition.  Let \(N_s^{\mathrm{obs}}(x)\ge 0\) denote a local observed or
calibrated abundance score for species \(s\) around \(x\).  For example,
\(N_s^{\mathrm{obs}}(x)\) may be obtained from standardized counts, abundance
estimates, occupancy-adjusted predictions, or a calibrated species distribution
model.  When the total local abundance is positive, the observed composition is
defined by
\begin{equation*}
p_s^{\mathrm{obs}}(x)
=
\frac{N_s^{\mathrm{obs}}(x)}
{\sum_{k=1}^S N_k^{\mathrm{obs}}(x)} ,
\qquad s=1,\ldots,S .
\end{equation*}
Thus \(p^{\mathrm{obs}}(x)\) is a probability vector describing the realized
or calibrated community composition at the same spatial scale
\(\mathcal N_\delta(x)\).  It should not be confused with the latent capacity
scores introduced below.

The local ecological capacity of species \(s\) at \(x\) is defined as the
integrated latent intensity
\begin{equation}
\label{eq:local-capacity-lambda}
\Lambda_s(x)
=
\int_{\mathcal N_\delta(x)}
\lambda_s(u)\,du .
\end{equation}
Thus \(\Lambda_s(x)\) is a local capacity score, or a latent expected abundance
over the neighbourhood \(\mathcal N_\delta(x)\).  It is again not a probability.
The vector
\[
\Lambda(x)
=
(\Lambda_1(x),\ldots,\Lambda_S(x))
\]
summarizes the local ecological capacity of the species pool around \(x\).

The subsequent information-geometric construction uses \(\Lambda(x)\) only
through constraints on feasible community compositions.  In particular, the
main probability vectors in the framework are the observed composition
\(p^{\mathrm{obs}}(x)\) and the potential composition \(p^{\mathrm{pot}}(x)\).
The normalized vector proportional to \(\Lambda(x)\) may be useful as a
descriptive capacity profile, but it is not introduced as a separate benchmark
composition.  The potential composition \(p^{\mathrm{pot}}(x)\) is defined
later by a variational principle under capacity constraints, rather than by
direct normalization of \(\Lambda(x)\).

If an estimated intensity surface is available on a spatial grid, the local
capacity can be computed by a smoothing or aggregation operation.  For example,
with grid values \(\widehat L_s[j]\approx \widehat\lambda_s(u_j)\) and a kernel
\(K_\delta\) representing the neighbourhood scale, one may use
\[
\widehat\Lambda_s(x_j)
=
\sum_\ell K_\delta(x_j-u_\ell)\widehat L_s[\ell].
\]
For regular grids this is a convolution-type computation, while for irregular
sampling locations it is a local weighted average.  This numerical step is
separate from the variational definition of the potential composition.

\subsection{Sampling bias and calibrated composition}

The latent ecological intensity \(\lambda_s(x)\) is not observed directly in
empirical biodiversity data.  Detection, sampling effort, accessibility, and
survey design may thin the latent point process.  A compact representation is
\begin{equation*}
\lambda_s^{\mathrm{sam}}(x,e)
=
\rho_s(x,e)\lambda_s(x),
\qquad
0\le \rho_s(x,e)\le 1,
\end{equation*}
where \(e\) denotes effort or design information and \(\rho_s(x,e)\) is a
thinning probability.  Here \(\lambda_s^{\mathrm{sam}}(x,e)\) is the intensity
of the sampled or detected process, whereas \(\lambda_s(x)\) is the latent
ecological intensity.

For binary detection over a sufficiently small sampling unit, this thinning
interpretation leads to the complementary log-log form
\begin{equation*}
\log\{-\log(1-\Pr(Y_s=1\mid x,e))\}
=
\log \lambda_s(x)+\log \rho_s(x,e),
\end{equation*}
up to a sampling-area offset.  Models of this kind separate ecological intensity
from observation intensity, a distinction that is central in modern large-scale
biodiversity modeling \citep{KellingEtAl2019}.

This distinction also clarifies the meaning of the observed composition
\(p^{\mathrm{obs}}(x)\).  In the present framework, \(p^{\mathrm{obs}}(x)\) is
not the raw composition of detected records.  Rather, it is an observed,
estimated, or calibrated representation of the realized community composition
at location \(x\).  Depending on the data source, it may be obtained from direct
standardized counts, model-based abundance estimates, occupancy models,
point-process models with effort correction, or other calibrated species
distribution models.

If thinning is ignored, a diversity gap may confound ecological deficit with
survey deficit.  If thinning is modeled, the estimated latent intensity
\(\widehat\lambda_s(x)\) provides a route to the capacity vector
\(\widehat\Lambda(x)\) through \eqref{eq:local-capacity-lambda}.  The role of
the point-process model is therefore interpretive and constructive: it explains
how local capacity scores may be obtained, but the information-geometric
definition of \(p^{\mathrm{pot}}(x)\) does not require a particular observation
model.

The same idea is compatible with N-mixture models for imperfect count data.  In
a classical N-mixture formulation, a latent abundance \(N_i\) at site \(i\) is
modeled, for example, as
\[
N_i\sim \operatorname{Poisson}(\mu_i),
\]
and repeated observations are generated conditionally on \(N_i\) through a
detection model \citep{royle2004nmixture}.  If site \(i\) is identified with a
local neighbourhood \(\mathcal N_\delta(x)\), then the N-mixture mean
\(\mu_i\) corresponds to the integrated latent intensity
\[
\mu_i
=
\int_{\mathcal N_\delta(x)}
\lambda_s(u)\,du
=
\Lambda_s(x).
\]
Thus \(\Lambda_s(x)\) can be viewed as a continuous-space analogue of a latent
abundance mean in an N-mixture model.  This comparison is only a conceptual
bridge: the proposed diversity-gap framework does not require an N-mixture
model, nor does it identify the potential benchmark with a realized random
abundance.

\subsection{Potential compositions and gaps}

The central additional object is a potential composition
\[
p^{\mathrm{pot}}(x)\in \Delta_{S-1}.
\]
It is not intended to represent an observed community.  Rather, it is a
benchmark distribution that is admissible under local constraints and optimized
according to a specified principle.  The constraints may encode environmental
capacity, resource budgets, trait requirements, dispersal accessibility, or
management restrictions.

Let \(\mathcal C(x)\subset \Delta_{S-1}\) be a nonempty constraint set at location
\(x\).  For a diversity functional \( D \), a potential composition may
be defined by
\begin{equation}
\label{eq:pot-max-general}
p_{ D }^{\mathrm{pot}}(x)
\in
\underset{p(x)\in \mathcal C(x)}{\mathrm{argmax}}
 D (p(x)).
\end{equation}
The corresponding diversity gap is
\begin{equation}
\label{eq:gap-general}
\Delta_{ D }(x)
=
 D (p_{ D }^{\mathrm{pot}}(x))
-
 D (p^{\mathrm{obs}}(x)).
\end{equation}
When \(p^{\mathrm{obs}}(x)\in \mathcal C(x)\), this gap is nonnegative by
definition.  If \(p^{\mathrm{obs}}(x)\notin \mathcal C(x)\), then one may either
modify the constraint set so that the observed composition is admissible, or
interpret \(\Delta_{ D }\) as a benchmark contrast rather than a
mathematical deficit.

An alternative formulation uses a reference distribution \(r(x)\in\Delta_{S-1}\)
and an information divergence \(\mathcal D\).  The potential composition is then
defined by
\begin{equation*}
p^{\mathrm{pot}}(x)
=
\underset{p(x)\in \mathcal C(x)}{\mathrm{argmin}}
\mathcal D(p(x)\|r(x)).
\end{equation*}
This is an information projection.  In the Shannon case, \(\mathcal D\) is
typically the Kullback--Leibler divergence.  For power divergences, it may be an
\(\alpha\)-, \(\beta\)-, or \(\gamma\)-divergence \citep{cichocki2010families}.  The projection formulation
is useful when the goal is to find the closest feasible composition to a
reference state, rather than the most diverse feasible composition.

In this chapter we use the word \emph{potential} in a mathematical sense.  It
does not mean that the benchmark composition will be reached by natural
dynamics or management intervention.  It means that the composition is feasible
under the chosen constraints and optimal under the chosen variational
principle.  The ecological meaning of the potential benchmark depends on the
validity of the constraints.

\section{Information-geometric construction}
\label{sec:ig-construction}

This section describes a general information-geometric construction of
potential diversity.  The point-process discussion above explains one possible
probabilistic origin of a capacity vector, while the construction below treats
that vector as a mathematical input.  
The key ingredients are a capacity vector and an escort distribution; a
divergence or entropy functional, treated in Section~\ref{sec:projection},
provides a complementary projection-based construction.
This viewpoint
is closely related to the information geometry of maximum diversity
distributions \citep{eguchi2025information}.

\subsection{Capacity constraints}
\label{subsec:capacity}
Let
\[
\Lambda(x)=(\Lambda_1(x),\ldots,\Lambda_S(x))
\]
be the vector of nonnegative local capacity scores at location \(x\).
As introduced in equation~\eqref{eq:local-capacity-lambda}, \(\Lambda_s(x)\)
may be interpreted as an integrated latent ecological intensity over a local
neighbourhood of \(x\).  It is not a species-composition probability.
The value \(\Lambda_s(x)\) may represent habitat suitability, resource availability,
climatic admissibility, accessibility, or any other quantity that constrains
the feasible contribution of species \(s\).  In a purely theoretical treatment,
\(\Lambda_s(x)\) is simply a given covariate attached to species \(s\) at location
\(x\).

A basic linear capacity constraint is
\begin{equation*}
\sum_{s=1}^S p_s(x) \Lambda_s(x)
=
c(x),
\end{equation*}
where \(c(x)\) is a local capacity budget.  The feasible set is then
\[
\mathcal C(x)
=
\left\{
p(x)\in\Delta_{S-1}:
\sum_{s=1}^S p_s(x) \Lambda_s(x)=c(x)
\right\}.
\]
If \(c(x)\) is chosen as the observed average
\[
c(x)=\sum_{s=1}^S p_s^{\mathrm{obs}}(x)\Lambda_s(x),
\]
then \(p^{\mathrm{obs}}(x)\in\mathcal C(x)\), and the resulting potential gap is
nonnegative whenever the maximum in \eqref{eq:pot-max-general} exists.

\subsection{Escort constraints}

For \(q>0\), the \(q\)-escort distribution associated with \(p\) is
\begin{equation*}
p_s^{(q)}(p)
=
\frac{p_s^q}{\sum_{k=1}^S p_k^q}.
\end{equation*}
The escort expectation of a capacity vector is
\begin{equation*}
E^{(q)}[\Lambda;p]
=
\sum_{s=1}^S p_s^{(q)}(p)\Lambda_s.
\end{equation*}
The corresponding escort capacity constraint is
\begin{equation*}
E^{(q)}[\Lambda(x);p(x)]
=
c_q(x).
\end{equation*}
If
\[
c_q(x)=E^{(q)}[\Lambda(x);p^{\mathrm{obs}}(x)],
\]
then the observed composition is feasible.

Escort constraints are useful because they let the constraint depend on the
same order of emphasis as the diversity functional.  When \(q>1\), dominant
species receive more weight in the constraint.  When \(0<q<1\), rare species
receive relatively more weight.  This order-sensitive constraint is natural for
Hill-type diversity, because the order \(q\) already determines how the
diversity functional weights rare and common species.

\section{Hill-type potential diversity}
\label{sec:hill-potential}

We now specialize the general framework to Hill numbers.  Fix \(q\ge 0\).  
A Hill-type potential composition is defined by
\begin{equation*}
p_q^{\mathrm{pot}}(x)
\in
\underset{p(x)\in\mathcal C_q(x)}{\mathrm{argmax}}
D_q(p(x)),
\end{equation*}
where \(\mathcal C_q(x)\) is a constraint set.  A natural choice is the escort
capacity set
\begin{equation*}
\mathcal C_q(x)
=
\left\{
p(x)\in\Delta_{S-1}:
E^{(q)}[\Lambda(x);p(x)]
=
E^{(q)}[\Lambda(x);p^{\mathrm{obs}}(x)]
\right\}.
\end{equation*}
This choice makes the observed composition feasible and aligns the order of the
constraint with the order of the diversity functional.

The Hill-type diversity gap is
\begin{equation}
\label{eq:hill-gap}
\Delta_q^H(x)
=
D_q(p_q^{\mathrm{pot}}(x))
-
D_q(p^{\mathrm{obs}}(x)).
\end{equation}
For \(q=0\), the gap is a potential richness gap.  For \(q=1\), it is an
effective Shannon-diversity gap.  For \(q=2\), it is an inverse-Simpson gap and
therefore emphasizes dominance.

\subsection{Maximum entropy form}
\label{subsec:maxent}

For \(q=1\), the maximization of \(D_1\) is equivalent to the maximization of
Shannon entropy.  Under the linear capacity constraint
\[
\sum_{s=1}^S p_s \Lambda_s(x)=c(x),
\]
the Lagrange equations give the Gibbs form
\begin{equation}
\label{eq:gibbs-pot}
p_s^{\mathrm{pot}}(x)
=
\frac{\exp\{\theta(x)\Lambda_s(x)\}}
{\sum_{k=1}^S \exp\{\theta(x)\Lambda_k(x)\}},
\end{equation}
where \(\theta(x)\) is chosen so that the capacity constraint is satisfied.
The sign and magnitude of \(\theta(x)\) determine whether the benchmark puts
more mass on high-capacity or low-capacity species.  The constraint budget
\(c(x)\) fixes this choice.

\subsection{\texorpdfstring{\(q\)}{q}-exponential form}

For \(q\neq 1\), the Tsallis entropy representation leads to a
\(q\)-exponential form.  Formally, under an escort capacity constraint,
\[
E^{(q)}[\Lambda(x);p]=c_q(x),
\]
the stationary distribution has the form
\begin{equation}
\label{eq:qexp-pot}
p_s^{\mathrm{pot}}(x)
=
\frac{
\left[
1-(1-q)\beta(x)\{\Lambda_s(x)-c_q(x)\}
\right]_+^{1/(1-q)}
}
{
\sum_{k=1}^S
\left[
1-(1-q)\beta(x)\{\Lambda_k(x)-c_q(x)\}
\right]_+^{1/(1-q)}
},
\end{equation}
where \(\beta(x)\) is a Lagrange multiplier, see \citet{eguchi2025information} for detailed discussion.  The truncation
\([u]_+=\max(u,0)\) allows boundary solutions when some species are excluded
from the potential benchmark.

Equation~\eqref{eq:qexp-pot} clarifies why the potential distribution is not
generally uniform.  The benchmark preserves a local capacity budget and then
maximizes diversity within that constraint.  Thus the potential composition is
a constrained evenness state, not an unconstrained evenness state.

\subsection{Numerical computation of the benchmark}
\label{subsec:hill-computation}

The formula \eqref{eq:qexp-pot} also gives a practical computation.  For a fixed
location \(x\), define
\[
B_s(\beta)
=
\left[
1-(1-q)\beta\{\Lambda_s(x)-c_q(x)\}
\right]_+^{1/(1-q)}
\]
and
\[
p_s(\beta)
=
\frac{B_s(\beta)}{\sum_{k=1}^S B_k(\beta)}.
\]
The multiplier \(\beta\) is chosen so that the escort constraint is satisfied:
\begin{equation*}
F_x(\beta)
=
E^{(q)}[\Lambda(x);p(\beta)]-c_q(x)
=
0.
\end{equation*}
Thus the computation of \(p_q^{\mathrm{pot}}(x)\) is reduced to a scalar
root-finding problem.  A safeguarded Newton method or bisection method is often
sufficient, because the species dimension enters only through the evaluation of
\(B_s(\beta)\) and the normalized vector \(p(\beta)\).  Boundary solutions are
handled by the active set induced by \([\cdot]_+\).  In spatial applications,
this root-finding step is repeated independently over grid cells after the
capacity vector \(\Lambda(x)\) has been computed.

For \(q=1\), the same computation uses the Gibbs form
\eqref{eq:gibbs-pot}.  In that case one solves for \(\theta(x)\) in
\[
\sum_{s=1}^S p_s(\theta)\Lambda_s(x)-c(x)=0,
\qquad
p_s(\theta)
=
\frac{\exp\{\theta \Lambda_s(x)\}}
{\sum_{k=1}^S\exp\{\theta \Lambda_k(x)\}}.
\]
This is the usual one-dimensional maximum-entropy calibration problem.

\subsection{Hierarchical decomposition in the Shannon case}
\label{subsec:shannon-decomposition}

For \(q=1\), Shannon entropy gives an additional diagnostic decomposition.
Suppose that the species set is partitioned into disjoint groups
\(\mathcal G_1,\ldots,\mathcal G_K\), representing taxonomic, functional, or
phylogenetic categories.  For a composition \(p\), define
\[
P_k
=
\sum_{s\in\mathcal G_k}p_s,
\qquad
r_s^{(k)}
=
\frac{p_s}{P_k},
\quad s\in\mathcal G_k,
\]
whenever \(P_k>0\).  The entropy chain rule gives
\begin{equation*}
H(p)
=
H(P)+\sum_{k=1}^K P_k H(r^{(k)}).
\end{equation*}
Consequently, the Shannon entropy gap
\[
\Delta_1^{\mathrm{ent}}(x)
=
H(p_1^{\mathrm{pot}}(x))-H(p^{\mathrm{obs}}(x))
\]
decomposes as
\begin{align*}
\Delta_1^{\mathrm{ent}}(x)
&=
\left\{
H(P^{\mathrm{pot}}(x))-H(P^{\mathrm{obs}}(x))
\right\} \notag\\
&\quad+
\sum_{k=1}^K
\left\{
P_k^{\mathrm{pot}}(x)H(r^{(k),\mathrm{pot}}(x))
-
P_k^{\mathrm{obs}}(x)H(r^{(k),\mathrm{obs}}(x))
\right\}.
\end{align*}
The first term measures an unrealized between-group component, while the second
term measures unrealized within-group evenness.  This decomposition is useful
when a large gap may arise either from the loss of an entire guild or from the
hyper-dominance of a few species within otherwise feasible groups.

Because the Hill gap in \eqref{eq:hill-gap} uses the effective number
\(D_1=\exp\{H\}\), the additive formula above is most naturally applied on the
entropy scale.  The effective-number gap can then be reported alongside the
between- and within-group entropy components.

\begin{proposition}[Nonnegativity of the Hill gap]
\label{prop:hill-gap-nonnegative}
Assume that \(\mathcal C_q(x)\) is nonempty and that
\(p^{\mathrm{obs}}(x)\in\mathcal C_q(x)\).  If
\(p_q^{\mathrm{pot}}(x)\) is a maximizer of \(D_q\) over
\(\mathcal C_q(x)\), then
\[
\Delta_q^H(x)\ge 0.
\]
\end{proposition}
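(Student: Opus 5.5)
The claim follows directly from the definition of a maximizer, so I will argue in one line. The only care needed is to make sure $D_q$ takes finite real values everywhere on the simplex, so that the difference in \eqref{eq:hill-gap} is well defined.

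First I will record that $D_q$ maps $\Delta_{S-1}$ into $[1,S]$ for every $q\ge 0$.
\begin{itemize}
\item For $q=0$, $D_0$ counts the species with $p_s>0$, which lies between $1$ and $S$.
\item For $q=1$, I use $H(p)\in[0,\log S]$, so $D_1\in[1,S]$.
\item For $q\ne 1$ with $q>0$, at least one $p_s$ is positive, so $\sum_s p_s^q>0$ and the power $1/(1-q)$ is a finite positive number.
\end{itemize}
Hence every value $D_q(p)$ is a finite real number and $\Delta_q^H(x)$ is a genuine real difference.

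Next I use the defining property of a maximizer. Since $p_q^{\mathrm{pot}}(x)$ maximizes $D_q$ over $\mathcal C_q(x)$, we have $D_q(p_q^{\mathrm{pot}}(x))\ge D_q(p)$ for every $p\in\mathcal C_q(x)$. Taking $p=p^{\mathrm{obs}}(x)$, which is admissible by hypothesis, gives $D_q(p_q^{\mathrm{pot}}(x))\ge D_q(p^{\mathrm{obs}}(x))$. Subtracting yields $\Delta_q^H(x)\ge 0$ by \eqref{eq:hill-gap}.

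Existence of the maximizer is assumed in the statement, so it need not be proved. I will still add a remark for the escort set. For $q>0$ the map $p\mapsto E^{(q)}[\Lambda;p]$ is continuous on the simplex, because the denominator $\sum_k p_k^q$ is positive there. So $\mathcal C_q(x)$ is closed, hence compact. For $q>0$, $D_q$ is also continuous, so a maximizer exists by Weierstrass.

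For $q=0$, $D_0$ is only upper semicontinuous, but it takes finitely many values, so a maximum is attained anyway. In this case the escort constraint degenerates, since $p_s^0$ must be read as the indicator of $p_s>0$. The argument does not rely on this remark, so I would flag it only as a comment.

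There is no real obstacle. The one point needing attention is the finiteness of $D_q$ in the first step, because boundary points with zero coordinates could otherwise cause trouble when $q<1$ or $q=1$. The conventions $0^q=0$ for $q>0$ and $0\log 0=0$ handle these cases.
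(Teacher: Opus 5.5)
Your proof is correct and is exactly the paper's argument: feasibility of $p^{\mathrm{obs}}(x)$ in $\mathcal C_q(x)$, combined with the maximizing property of $p_q^{\mathrm{pot}}(x)$, gives $D_q(p_q^{\mathrm{pot}}(x))\ge D_q(p^{\mathrm{obs}}(x))$, and your finiteness check is a harmless extra. One slip in your optional existence remark: richness $D_0$ is lower, not upper, semicontinuous, since a small perturbation can only enlarge the support; your point that it takes finitely many values still guarantees that the maximum is attained.
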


\begin{proof}
Since \(p^{\mathrm{obs}}(x)\in\mathcal C_q(x)\) and
\(p_q^{\mathrm{pot}}(x)\) maximizes \(D_q\) on \(\mathcal C_q(x)\),
\[
D_q(p_q^{\mathrm{pot}}(x))
\ge
D_q(p^{\mathrm{obs}}(x)).
\]
This is exactly \(\Delta_q^H(x)\ge 0\).
\end{proof}

The proposition is mathematically simple but conceptually important.  The gap
is not a residual from a fitted model.  It is a variational deficit relative to
an explicitly defined feasible benchmark.

\subsection{Divergence projection}
\label{sec:projection}
Let \(r(x)\in\Delta_{S-1}\) be a reference composition.  Given a strictly convex
function \(\Phi\), the Bregman divergence
\[
\mathcal D_{\Phi}(p\|r)
=\Phi(p)-\Phi(r)- \langle\nabla \Phi(r), p-r\rangle
\]
defines a projection geometry on the simplex.  The potential composition can be
defined as
\[
p^{\mathrm{pot}}(x)
=
\underset{p(x)\in\mathcal C(x)}{\mathrm{argmin}}
\mathcal D_{\Phi}(p(x)\|r(x)).
\]
For the negative Shannon entropy
\[
\Phi(p)=\sum_{s=1}^S p_s\log p_s,
\]
the Bregman divergence is the Kullback--Leibler divergence.  The projection
then yields an exponential-family form under linear constraints.  For
power-type entropies, the associated projection yields \(q\)-exponential or
power-law forms.

The projection and maximum-diversity formulations are complementary.  Maximum
diversity asks for the most diverse feasible composition.  Projection asks for
the least distorted feasible composition relative to a reference.  In many
cases they coincide after a suitable choice of reference and entropy.

This complementarity can be made precise: the maximum-diversity formulation is
the special case of divergence projection in which the reference is uniform.
The associated diversity functional is defined as
\[
D_\Phi(p)=-\sum_{s=1}^S \phi(p_s).
\]
The Shannon case is \(\phi(t)=t\log t\); the Tsallis case is
\(\phi(t)=(t^{q}-t)/(q-1)\), whose monotone transform gives the Hill number
\(D_q\) in \eqref{eq:hill}.

\begin{proposition}[Maximum diversity as uniform-reference projection]
\label{prop:maxdiv-as-projection}
Let \(\Phi(p)=\sum_{s=1}^S \phi(p_s)\) be a separable Bregman generator and let
\(u=(1/S,\ldots,1/S)\) be the uniform composition.  Then, for every
\(p\in\Delta_{S-1}\),
\begin{equation}
\label{eq:maxdiv-as-projection}
\mathcal D_{\Phi}(p\|u)
=
-D_\Phi(p)-S\,\phi(1/S).
\end{equation}
Consequently, for any constraint set \(\mathcal C(x)\),
\[
\underset{p\in\mathcal C(x)}{\mathrm{argmin}}\,\mathcal D_{\Phi}(p\|u)
=
\underset{p\in\mathcal C(x)}{\mathrm{argmax}}\,\bigl(D_\Phi(p)\bigr),
\]
so projection onto \(\mathcal C(x)\) with a uniform reference coincides with
maximization of the separable diversity \(D_\Phi\) over \(\mathcal C(x)\).
\end{proposition}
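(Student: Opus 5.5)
The plan is to expand the Bregman divergence directly at the uniform reference and use the separable structure together with the simplex constraint \(\sum_s p_s=1\).

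First, write \(\Phi(p)=\sum_s\phi(p_s)\) and compute the gradient at \(u\). By separability, \(\nabla\Phi(u)=(\phi'(1/S),\ldots,\phi'(1/S))\), a constant vector. Here I assume \(\phi\) is differentiable at \(1/S\), which holds since \(1/S\) lies in the interior of \((0,1)\); this covers both the Shannon and Tsallis cases. The linear term then becomes
\[
\langle\nabla\Phi(u),p-u\rangle=\phi'(1/S)\sum_{s=1}^S(p_s-1/S)=\phi'(1/S)(1-1)=0 ,
\]
so it vanishes identically on the simplex. This is the key step, and the only place the uniformity of the reference is used: a constant gradient is orthogonal to the tangent directions of \(\Delta_{S-1}\).

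Second, the constant term is \(\Phi(u)=\sum_s\phi(1/S)=S\,\phi(1/S)\). The remaining term is \(\Phi(p)=\sum_s\phi(p_s)=-D_\Phi(p)\), by the definition \(D_\Phi(p)=-\sum_s\phi(p_s)\). Collecting the pieces gives \(\mathcal D_\Phi(p\|u)=-D_\Phi(p)-S\,\phi(1/S)\), which is \eqref{eq:maxdiv-as-projection}. Boundary points \(p\) with some \(p_s=0\) cause no difficulty, provided \(\phi(0)\) is defined by continuity (for example \(0\log 0=0\)), because the gradient is only evaluated at \(u\).

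Finally, the set identity follows because \(\mathcal D_\Phi(\cdot\|u)\) and \(-D_\Phi\) differ by the constant \(-S\phi(1/S)\), which is independent of \(p\). An additive constant does not change minimizers, and minimizing \(-D_\Phi\) is the same as maximizing \(D_\Phi\). The argmin and argmax sets over \(\mathcal C(x)\) therefore coincide as sets, including when either is empty or contains more than one point.

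I do not expect a real obstacle. The only point needing care is the differentiability and boundary convention for \(\phi\), which is why I would state it explicitly.
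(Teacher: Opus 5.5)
Your proposal is correct and is essentially the paper's proof: $\nabla\Phi(u)$ is the constant vector with entries $\phi'(1/S)$, so the linear term vanishes on the simplex, $\Phi(u)=S\,\phi(1/S)$, and the additive constant leaves the optimizer set unchanged. Your remarks on differentiability at $1/S$ and the boundary convention for $\phi(0)$ add care that the paper leaves implicit.
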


\begin{proof}
By definition,
\(\mathcal D_{\Phi}(p\|u)=\Phi(p)-\Phi(u)-\langle\nabla\Phi(u),p-u\rangle\).
Because \(u\) is uniform, \(\nabla\Phi(u)_s=\phi'(1/S)\) is the same for every
\(s\); writing this common value as \(a\),
\[
\langle\nabla\Phi(u),p-u\rangle
=
a\sum_{s=1}^S (p_s-u_s)
=
a(1-1)=0,
\]
since \(p,u\in\Delta_{S-1}\).  Hence
\(\mathcal D_{\Phi}(p\|u)=\Phi(p)-\Phi(u)=-D_\Phi(p)-S\phi(1/S)\), which is
\eqref{eq:maxdiv-as-projection}.  The constant \(S\phi(1/S)\) does not affect
the argmin, giving the stated equivalence.
\end{proof}

The linear term vanishes only because the reference is uniform; for a
non-uniform reference \(\nabla\Phi(r)\) is no longer constant, the term
\(\langle\nabla\Phi(r),p\rangle\) survives, and \(\mathcal D_{\Phi}(p\|r)\) is no
longer an affine function of \(D_\Phi(p)\) alone.  Thus the maximum-diversity
benchmark is precisely the uniform-reference instance of the projection, and a
non-uniform reference is what allows the projection to express benchmarks—such
as the regional pool below—that maximum diversity cannot.

The projection formulation requires a reference composition \(r(x)\), and the
resulting benchmark depends on this choice.  The reference is best understood
as a declaration of which composition counts as having no unrealized
diversity, since the gap induced by the projection vanishes when
\(p^{\mathrm{obs}}(x)=r(x)\in\mathcal C(x)\).  Three choices are natural.

\begin{itemize}
\item \textbf{Regional pool composition.}  Taking \(r(x)\) to be the relative
abundance of the regional (or habitat-specific) species pool is the closest
analogue of dark diversity: the benchmark is then the feasible composition
nearest to the pool, and the gap measures the unrealized part of the pool
under the local capacity budget.
\item \textbf{Capacity profile.}  Taking \(r(x)\propto\Lambda(x)\) uses the
local capacity vector as a reference.  This is computationally convenient, but
\(\Lambda(x)\) is a latent intensity rather than a probability, so the
resulting benchmark should be read as a capacity-weighted, rather than an
ecological, reference.
\item \textbf{Uniform composition.}  Taking \(r(x)\) uniform recovers the
maximum-diversity benchmark, by Proposition~\ref{prop:maxdiv-as-projection}; in
the Shannon case this is the Gibbs form \eqref{eq:gibbs-pot} of
Section~\ref{subsec:maxent}.
\end{itemize}

The choice of \(r(x)\) is therefore not merely technical: it fixes the
ecological meaning of the benchmark, while the constraint set \(\mathcal C(x)\)
fixes what is locally admissible.  Separating the two—reference for the
target, constraint for the feasibility—is what makes the projection more
flexible than the maximum-diversity formulation.  As shown next, this
separation also has a precise geometric consequence.

\subsection{A Pythagorean decomposition}
\label{Pythagorean}
The projection has an additional structure that the
maximum-diversity formulation does not share.  Because the capacity
constraint is affine in \(p\), the divergence projection satisfies an exact
Pythagorean relationship.

\begin{proposition}[Pythagorean relationship for the capacity projection]
\label{prop:pythagoras}
Let \(r(x)\in\Delta_{S-1}\) be a reference composition and let
\[
\mathcal C(x)
=
\left\{
p(x)\in\Delta_{S-1}:
\sum_{s=1}^S p_s(x)\Lambda_s(x)=c(x)
\right\}
\]
be the capacity set.  Let
\[
p^{\mathrm{pot}}(x)
\in
\underset{p(x)\in\mathcal C(x)}{\arg\min}\,
\mathcal D_\Phi(p(x)\|r(x))
\]
be the Bregman projection of \(r(x)\) onto \(\mathcal C(x)\).  Assume that
\(p^{\mathrm{pot}}(x)\) lies in the relative interior of \(\mathcal C(x)\), so
that no nonnegativity constraint is active.  Then, for every
\(p\in\mathcal C(x)\),
\begin{equation}
\label{eq:pythagorean-capacity}
\mathcal D_\Phi(p\|r(x))
=
\mathcal D_\Phi(p\|p^{\mathrm{pot}}(x))
+
\mathcal D_\Phi(p^{\mathrm{pot}}(x)\|r(x)).
\end{equation}
\end{proposition}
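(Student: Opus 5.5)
The plan is to combine the Bregman three-point identity with the first-order optimality condition for the projection.

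\textbf{Step 1: three-point identity.} For any \(p,p^*,r\) where \(\Phi\) is differentiable, expanding the definition of \(\mathcal D_\Phi\) shows that the \(\Phi\)-values cancel and
\[
\mathcal D_\Phi(p\|r)-\mathcal D_\Phi(p\|p^*)-\mathcal D_\Phi(p^*\|r)
=
\langle \nabla\Phi(p^*)-\nabla\Phi(r),\,p-p^*\rangle .
\]
This is a routine rearrangement. Proving \eqref{eq:pythagorean-capacity} therefore reduces to showing that the inner product vanishes for \(p^*=p^{\mathrm{pot}}(x)\) and every \(p\in\mathcal C(x)\).

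\textbf{Step 2: stationarity at the projection.} The map \(p\mapsto\mathcal D_\Phi(p\|r)\) has gradient \(\nabla\Phi(p)-\nabla\Phi(r)\). By assumption, \(p^*\) lies in the relative interior of \(\mathcal C(x)\), so no nonnegativity constraint is active. It is therefore a local minimizer on the affine set
\[
\{p:\textstyle\sum_s p_s=1,\ \sum_s p_s\Lambda_s=c\}
\]
intersected with a neighbourhood.

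Since this set is affine, the Lagrange conditions apply directly. There exist multipliers \(\mu,\theta\) with
\[
\nabla\Phi(p^*)-\nabla\Phi(r)=\mu\mathbf 1+\theta\Lambda ,
\]
where \(\mathbf 1\) is the all-ones vector. Equivalently, the gradient is orthogonal to the direction space \(\{v:\sum_s v_s=0,\ \sum_s v_s\Lambda_s=0\}\). This space is exactly the set of differences \(p-p^*\) with \(p\in\mathcal C(x)\).

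A direct argument avoids multipliers altogether. For any \(p\in\mathcal C(x)\), the points \(p^*\pm t(p-p^*)\) stay in \(\mathcal C(x)\) for small \(t>0\), by relative interiority. Hence both one-sided directional derivatives are nonnegative, which forces the directional derivative to be zero.

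\textbf{Step 3: conclusion.} Substituting into Step 1, for \(p\in\mathcal C(x)\),
\[
\langle\mu\mathbf 1+\theta\Lambda,\,p-p^*\rangle=\mu(1-1)+\theta(c-c)=0 .
\]
This gives \eqref{eq:pythagorean-capacity}. This mirrors the proof of Proposition~\ref{prop:maxdiv-as-projection}, where constancy of the gradient killed the linear term.

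\textbf{Main obstacle.} The delicate point is Step 2's regularity. The statement needs \(\Phi\) differentiable at \(p^*\); for the Shannon generator this requires \(p^*\) to have positive coordinates, which the interior assumption should supply. The two-sided perturbation also requires \(p^*\pm t(p-p^*)\in\Delta_{S-1}\). I would make explicit that ``relative interior of \(\mathcal C(x)\)'' is read as all coordinates of \(p^*\) being positive. Under that reading, any segment within \(\mathcal C(x)\) through \(p^*\) extends slightly beyond \(p^*\) in both directions, and the argument goes through.
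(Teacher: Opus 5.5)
Your proposal is correct and follows essentially the same route as the paper. Both use the Bregman three-point identity, then first-order stationarity at \(p^{\mathrm{pot}}(x)\) to kill the cross term. That stationarity is expressed either through Lagrange multipliers for the two affine constraints or, equivalently, as orthogonality of \(\nabla\Phi(p^{\mathrm{pot}})-\nabla\Phi(r)\) to \(\mathrm{span}\{\Lambda,\mathbf 1\}^{\perp}\). Your note that \(\Phi\) must be differentiable at \(p^{\mathrm{pot}}(x)\), so that the interiority assumption should be read as strict positivity of all coordinates, makes precise a regularity condition the paper leaves implicit.
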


\begin{proof}
Fix the location \(x\) and write \(r=r(x)\), \(\Lambda=\Lambda(x)\),
\(c=c(x)\), and \(p^{\mathrm{pot}}=p^{\mathrm{pot}}(x)\).  The Bregman
divergence satisfies the three-point identity
\begin{equation}
\label{eq:three-point}
\mathcal D_{\Phi}(q\|r)
=
\mathcal D_{\Phi}(q\|p^{\mathrm{pot}})
+
\mathcal D_{\Phi}(p^{\mathrm{pot}}\|r)
+
\bigl\langle
\nabla\Phi(p^{\mathrm{pot}})-\nabla\Phi(r),\,
q-p^{\mathrm{pot}}
\bigr\rangle,
\end{equation}
which holds for all \(p^{\mathrm{pot}},q,r\) by direct expansion of the
definition of \(\mathcal D_{\Phi}\).  It therefore suffices to show that the
inner-product term vanishes for every \(q\in\mathcal C(x)\).

The feasible set \(\mathcal C(x)\) is the intersection of the two affine
constraints \(\sum_s p_s\Lambda_s=c\) and \(\sum_s p_s=1\).  Its set of
admissible directions is the linear subspace
\[
V
=
\Bigl\{v\in\mathbb R^S:\ \textstyle\sum_s v_s\Lambda_s=0,\ \sum_s v_s=0\Bigr\}
=
\mathrm{span}\{\Lambda,\mathbf 1\}^{\perp},
\]
where \(\mathbf 1=(1,\ldots,1)\).  Indeed, for any \(q,p^{\mathrm{pot}}\in
\mathcal C(x)\), the difference \(v=q-p^{\mathrm{pot}}\) satisfies
\(\sum_s v_s\Lambda_s=c-c=0\) and \(\sum_s v_s=1-1=0\), so \(v\in V\).

Since \(p^{\mathrm{pot}}\) minimizes \(f(p)=\mathcal D_{\Phi}(p\|r)\) over the
affine set \(\mathcal C(x)\), and \(\nabla_p f(p)=\nabla\Phi(p)-\nabla\Phi(r)\),
the first-order stationarity condition is
\begin{equation}
\label{eq:stationarity}
\bigl\langle
\nabla\Phi(p^{\mathrm{pot}})-\nabla\Phi(r),\,v
\bigr\rangle=0
\qquad\text{for all }v\in V.
\end{equation}
Equivalently, by the Lagrange conditions for the two affine constraints, there
exist multipliers \(\beta,\nu\in\mathbb R\) with
\(\nabla\Phi(p^{\mathrm{pot}})-\nabla\Phi(r)=\beta\Lambda+\nu\mathbf 1\), which
lies in \(\{\Lambda,\mathbf 1\}=V^{\perp}\); hence
\eqref{eq:stationarity} holds.

Applying \eqref{eq:stationarity} with \(v=q-p^{\mathrm{pot}}\in V\) shows that
the inner-product term in \eqref{eq:three-point} vanishes, which gives
\eqref{eq:pythagorean-capacity}.
\end{proof}
From Proposition \ref{prop:pythagoras}, if the observed composition satisfies
\(p^{\mathrm{obs}}(x)\in\mathcal C(x)\), then
\begin{equation}
\label{eq:pythagoras-obs}
\mathcal D_{\Phi}(p^{\mathrm{obs}}(x)\|r(x))
=
\mathcal D_{\Phi}(p^{\mathrm{obs}}(x)\|p^{\mathrm{pot}}(x))
+
\mathcal D_{\Phi}(p^{\mathrm{pot}}(x)\|r(x)).
\end{equation}
The decomposition \eqref{eq:pythagoras-obs} separates the total divergence of
the observed composition from the reference into a benchmark-relative deficit
and a structural term that reflects how far the capacity-constrained benchmark
itself lies from the reference.  The first term is the natural divergence
analogue of the diversity gap in \eqref{eq:gap-general}, and is comparatively
insensitive to the choice of reference \(r(x)\), since the reference enters
\eqref{eq:pythagoras-obs} mainly through the second term.

\section{Rao-type potential diversity}
\label{sec:rao-potential}

Hill-type gaps measure abundance and evenness deficits.  Rao-type gaps measure
deficits in dissimilarity spread.  Let \(A=(a_{st})\) be a species
dissimilarity matrix.  The observed Rao diversity is
\[
Q^{\mathrm{obs}}(x)
=
Q(p^{\mathrm{obs}}(x))
=
\sum_{s,t}p_s^{\mathrm{obs}}(x)p_t^{\mathrm{obs}}(x)a_{st}.
\]
A Rao-type potential composition is defined by
\begin{equation*}
p_R^{\mathrm{pot}}(x)
\in
\underset{p(x)\in\mathcal C_R(x)}{\mathrm{argmax}}
Q(p(x)),
\end{equation*}
where \(\mathcal C_R(x)\) is a capacity or trait-feasibility set.  The Rao
diversity gap is
\begin{equation*}
\Delta^R(x)
=
Q(p_R^{\mathrm{pot}}(x))
-
Q(p^{\mathrm{obs}}(x)).
\end{equation*}

The constraint set may be the same as in the Hill case, for example
\[
\mathcal C_R(x)
=
\left\{
p(x)\in\Delta_{S-1}:
\sum_s p_s(x) \Lambda_s(x)
=
\sum_s p_s^{\mathrm{obs}}(x)\Lambda_s(x)
\right\}.
\]
It may also include trait constraints such as
\[
\sum_s p_s(x) z_s=m(x),
\]
where \(z_s\) is a trait vector and \(m(x)\) is a local trait budget.  The
choice depends on the ecological meaning of the potential benchmark.

\subsection{Concavity and well-posedness}

The behavior of Rao maximization depends on the dissimilarity matrix.  If
\(A\) is conditionally negative semidefinite, that is,
\[
u^\top A u\le 0
\quad\hbox{whenever}\quad
\sum_s u_s=0,
\]
then \(Q(p)=p^\top A p\) is concave on the simplex.  This condition holds, for
example, for squared Euclidean distances \(a_{st}=\|z_s-z_t\|^2\).  In that
case, maximizing \(Q\) over a convex constraint set is a well-posed concave
maximization problem.

If the dissimilarity matrix \(A\) does not satisfy this condition, the maximization
may have multiple local solutions or boundary-dominated behavior.  A stable
alternative is the regularized Rao potential
\begin{equation}
\label{eq:rao-regularized}
p_{R,\tau}^{\mathrm{pot}}(x)
\in
\underset{p(x)\in\mathcal C_R(x)}{\mathrm{argmax}}
\left\{
Q(p(x))-\tau \mathcal D_{\mathrm{KL}}(p(x)\|r(x))
\right\},
\end{equation}
where \(\tau>0\) and \(r(x)\) is a reference composition.  The regularization prevents
unstable concentration and gives an information-geometric interpretation as a
balance between dissimilarity expansion and divergence from the reference.

\begin{proposition}[Nonnegativity of the Rao gap]
\label{prop:rao-gap-nonnegative}
Assume that \(\mathcal C_R(x)\) is nonempty and that
\(p^{\mathrm{obs}}(x)\in\mathcal C_R(x)\).  If
\(p_R^{\mathrm{pot}}(x)\) maximizes \(Q\) over \(\mathcal C_R(x)\), then
\[
\Delta^R(x)\ge 0.
\]
The same statement holds for the regularized objective in
\eqref{eq:rao-regularized} if the gap is defined using that objective.
\end{proposition}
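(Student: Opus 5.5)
The argument copies the proof of Proposition~\ref{prop:hill-gap-nonnegative}, since nonnegativity follows directly from the maximizing property together with feasibility of the observed composition. First, fix the location \(x\) and write \(p^{\mathrm{obs}}=p^{\mathrm{obs}}(x)\), \(p_R^{\mathrm{pot}}=p_R^{\mathrm{pot}}(x)\), \(\mathcal C_R=\mathcal C_R(x)\). Because \(p_R^{\mathrm{pot}}\) maximizes \(Q\) over \(\mathcal C_R\), we have \(Q(p_R^{\mathrm{pot}})\ge Q(p)\) for every \(p\in\mathcal C_R\). The assumption \(p^{\mathrm{obs}}\in\mathcal C_R\) lets us take \(p=p^{\mathrm{obs}}\), which gives \(Q(p_R^{\mathrm{pot}})\ge Q(p^{\mathrm{obs}})\), i.e.\ \(\Delta^R(x)\ge 0\). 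Neither the concavity condition on \(A\) nor any structure of \(\mathcal C_R\) beyond feasibility is used; the existence of a maximizer is assumed in the statement.

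For the regularized version, I would first make explicit what ``the gap defined using that objective'' means. Set
\[
J_\tau(p)=Q(p)-\tau\,\mathcal D_{\mathrm{KL}}(p\|r(x)),
\]
and define \(\Delta^R_\tau(x)=J_\tau(p_{R,\tau}^{\mathrm{pot}}(x))-J_\tau(p^{\mathrm{obs}}(x))\). The same one-line argument then applies: \(p_{R,\tau}^{\mathrm{pot}}(x)\) maximizes \(J_\tau\) over \(\mathcal C_R\), and \(p^{\mathrm{obs}}\) is feasible, so \(\Delta^R_\tau(x)\ge 0\).

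The only point needing care is that \(J_\tau(p^{\mathrm{obs}})\) could equal \(-\infty\). This happens when \(p^{\mathrm{obs}}\) charges a species with \(r_s(x)=0\), since then \(\mathcal D_{\mathrm{KL}}(p^{\mathrm{obs}}\|r(x))=+\infty\). I would handle this by working in the extended reals. Since \(\mathcal C_R\) is nonempty and the maximum is assumed to exist, \(J_\tau(p_{R,\tau}^{\mathrm{pot}})\ge J_\tau(p^{\mathrm{obs}})\) still holds, and the gap is either a nonnegative real number or \(+\infty\), hence \(\ge 0\) in either case. Alternatively, one could add the assumption that the support of \(p^{\mathrm{obs}}\) lies in that of \(r(x)\). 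I would also remark that the regularized gap need not bound the unregularized Rao gap \(Q(p_{R,\tau}^{\mathrm{pot}})-Q(p^{\mathrm{obs}})\) from below. That quantity can be negative, which is why the statement requires the gap to be defined through the same objective that is maximized.
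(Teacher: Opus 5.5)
Your argument is the same as the paper's: it simply notes that \(p^{\mathrm{obs}}(x)\) is feasible and the potential composition is optimal over the same set, by reference to Proposition~\ref{prop:hill-gap-nonnegative}, and your treatment of \(\mathcal D_{\mathrm{KL}}=+\infty\) is extra care the paper does not take. One small correction to that extra remark: if \(\mathcal D_{\mathrm{KL}}(p\|r(x))=+\infty\) for every \(p\in\mathcal C_R(x)\), the regularized gap is \(-\infty-(-\infty)\), which is undefined rather than \(+\infty\); your alternative assumption that the support of \(p^{\mathrm{obs}}(x)\) lies in that of \(r(x)\) avoids this, because it makes the objective finite at \(p^{\mathrm{obs}}(x)\) and hence at the maximizer.
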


\begin{proof}
The proof is identical to Proposition~\ref{prop:hill-gap-nonnegative}.
The observed composition is feasible, and the potential composition is optimal
over the same feasible set.
\end{proof}

\subsection{Interpretation}

The Rao gap is not a species-richness gap.  It can be small even when some
species are absent, provided the remaining community spans the same
dissimilarity space.  Conversely, it can be large even when Hill diversity is
high, if the community consists of many functionally similar species.  Thus the
pair
\[
(\Delta_q^H(x),\Delta^R(x))
\]
separates two forms of biodiversity deficit: abundance/evenness deficit and
trait-sensitive spread deficit.

This distinction is especially important for conservation.  A community may
look diverse in the Hill sense because it has many relatively even species, but
it may still be functionally narrow.  Rao-type potential diversity identifies
this missing dimension by comparing the realized community with a benchmark
that expands dissimilarity spread under the same local constraints.

\subsection{A schematic illustration}

We give a schematic numerical illustration to clarify the difference between
Hill-type and Rao-type diversity gaps.  The purpose of this illustration is not
to mimic a realistic ecological system or to validate an empirical estimation
procedure.  Rather, it shows that an abundance-based gap and a
dissimilarity-based gap can respond differently to the same change in community
composition.

The example uses \(S=6\) species divided into two functional groups,
\(\{1,2,3\}\) and \(\{4,5,6\}.\)
The dissimilarity matrix is chosen so that species within the same functional
group are close, whereas species belonging to different functional groups are
distant.  A smooth potential composition \(p^{\mathrm{pot}}(x)\) is first
specified on the unit square.  The observed composition
\(p^{\mathrm{obs}}(x)\) is then obtained by imposing two schematic degradation
mechanisms.  Region A represents functional filtering, where one functional
group is depleted.  Region B represents within-group homogenization, where
species-level evenness is lost within functional groups while the broad
functional groups remain present.

Figure~\ref{fig:schematic-hill-rao-gaps} gives a schematic illustration of the
distinction between Hill-type and Rao-type diversity gaps.  The upper row shows
the effective Shannon diversity \(D_1\) and its potential gap, whereas the lower
row shows Rao's quadratic entropy \(Q\) and its potential gap.  Region A
represents functional filtering, where one functional group is depleted and the
Rao-type gap is emphasized.  Region B represents within-group homogenization,
where species-level evenness is lost within functional groups and the Hill-type
gap is emphasized.  Thus the two gaps diagnose different forms of unrealized
diversity, even though they are computed from the same pair
\((p^{\mathrm{obs}},p^{\mathrm{pot}})\).

\begin{figure}[htbp]
\centering
\includegraphics[width=\textwidth]{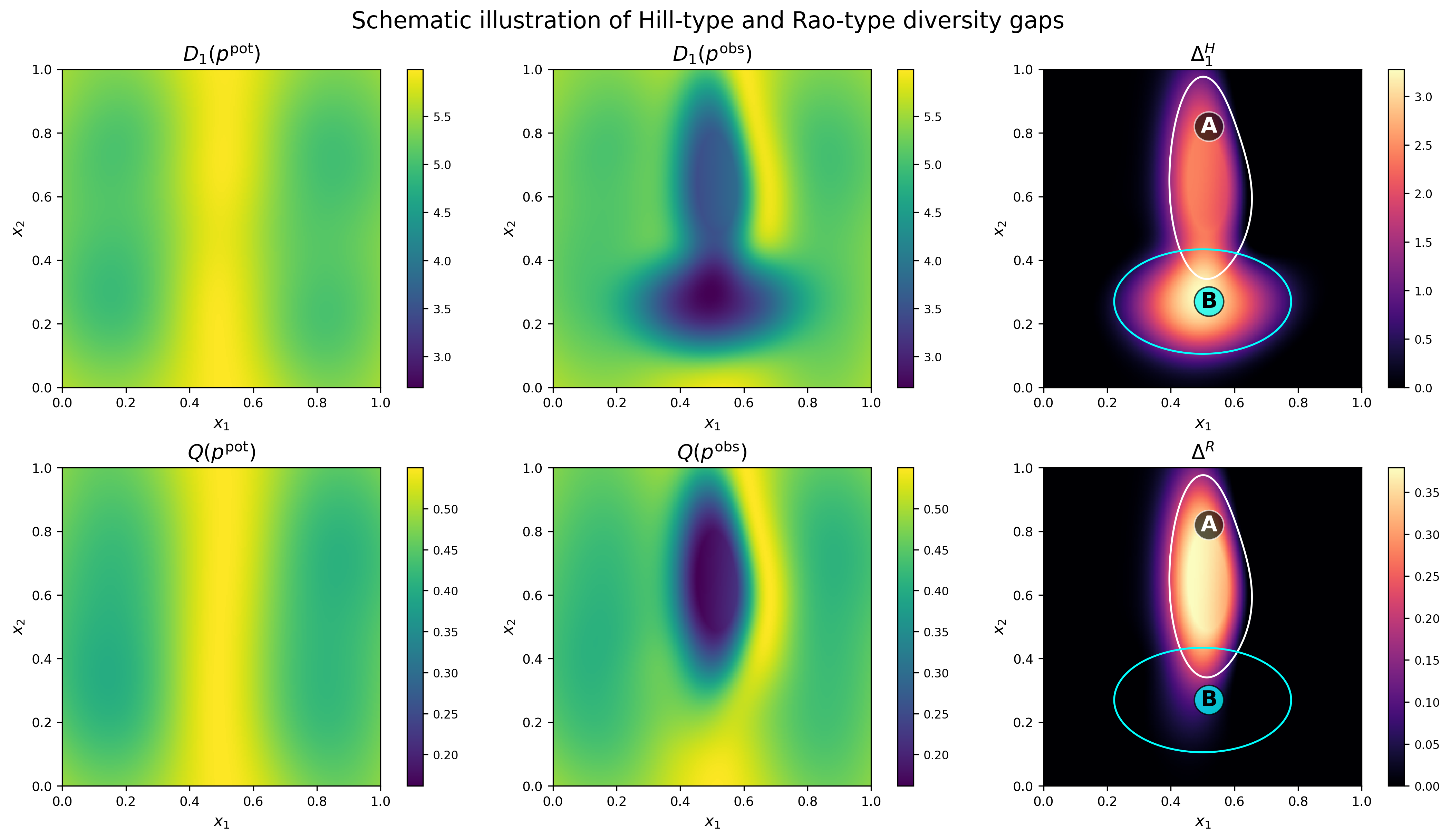}
\caption{
Schematic illustration of Hill-type and Rao-type potential diversity gaps.
The upper row displays \(D_1(p^{\mathrm{pot}})\),
\(D_1(p^{\mathrm{obs}})\), and \(\Delta^H_1\), while the lower row displays
\(Q(p^{\mathrm{pot}})\), \(Q(p^{\mathrm{obs}})\), and \(\Delta^R\).
}
\label{fig:schematic-hill-rao-gaps}
\end{figure}

\section{Dynamic extension}
\label{sec:dynamic}

The preceding sections are static.  In ecological applications, however, both
observed compositions and potential constraints vary over time.  Let
\[
p_t^{\mathrm{obs}}(x),\qquad
\Lambda_t(x),\qquad
A_t=(a_{st}(t))
\]
denote the observed composition, capacity vector, and dissimilarity matrix at
time \(t\).  The potential composition becomes
\[
p_t^{\mathrm{pot}}(x)
\in
\underset{p(x)\in\mathcal C(x)}{\mathrm{argmax}}
 D _t(p(x)),
\]
and the gap field is
\[
\Delta_{ D }(x,t)
=
 D _t(p_t^{\mathrm{pot}}(x))
-
 D _t(p_t^{\mathrm{obs}}(x)).
\]

This time-dependent formulation is a natural starting point for climate-change
applications.  Warming may change the capacity vector \(\Lambda_t(x)\), for example
by shifting the climatic suitability of species across latitude or elevation.
Migration and dispersal constraints may prevent the observed composition
\(p_t^{\mathrm{obs}}(x)\) from tracking the moving potential composition
\(p_t^{\mathrm{pot}}(x)\).  The resulting gap can be interpreted as a temporal
mismatch between ecological potential and realized community structure.

One may also introduce a dynamical system on the simplex.  A general form is
\begin{equation*}
\frac{d}{dt}p_t(x)
=
F_t(p_t(x),x),
\end{equation*}
where \(F_t\) represents colonization, extinction, competition, dispersal, and
environmental forcing.  
In an information-geometric formulation, one may
define a potential function
\[
\Psi_t(p,x)
=
D_t(p_t^{\mathrm{pot}}(x))-D_t(p)
\]
and consider a gradient-flow-type relaxation
\[
\frac{d}{dt}p_t(x)
=
-\operatorname{grad}_{g_t}\Psi_t(p_t(x),x),
\]
where \(g_t\) is a metric on the simplex.  Different choices of \(g_t\) lead
to different ecological dynamics.  The Fisher metric gives a
replicator-type geometry, while transport-type metrics would incorporate
spatial movement.

This dynamic viewpoint also suggests a robust-inference extension.  In
large-scale biodiversity data, the estimated composition field \(p_t(x)\) and
the capacity field \(\Lambda_t(x)\) may be affected by outlying observations,
uneven sampling effort, or abrupt local disturbances.  Density-powered Stein
operators provide one possible information-geometric tool for constructing
robust estimating equations and robust gradient-type dynamics for such
density fields \citep{eguchi2025robust}.  In the present chapter we do not
develop this direction, but it points to a possible connection between
potential diversity gaps and robust geometric inference for evolving ecological
systems.
 
The dynamic extension suggests three future directions.  First, one can study
whether observed communities lag behind climate-driven potential distributions.
Second, one can decompose gap changes into capacity shifts and composition
shifts.  Third, one can formulate restoration or assisted-migration strategies
as interventions that reduce the dynamic gap.

\section{Concluding remarks}
\label{sec:conclusion}

This chapter proposed an information-geometric framework for potential
biodiversity and diversity gaps.  The main message is that biodiversity
measurement should not be limited to descriptive summaries of realized
communities.  For conservation and ecological planning, it is equally important
to define a potential benchmark and to measure the gap from that benchmark.

The framework treats Hill numbers and Rao's quadratic entropy in a unified
way.  Hill-type gaps measure abundance and evenness deficits.  Rao-type gaps
measure deficits in trait, phylogenetic, or ecological dissimilarity spread.
Both are obtained by comparing an observed composition with a potential
composition defined by a constrained variational principle.  The same
comparison also gives a continuous version of dark diversity: not merely a list
of absent species, but a benchmark-relative abundance and evenness deficit on
the simplex.  The Shannon case further permits a hierarchical decomposition of
the gap into between-group and within-group components.

The chapter deliberately avoided empirical analysis.  A full empirical study
requires careful modeling of observation processes, spatial sampling bias,
temporal variation, and the construction of species dissimilarity matrices.
Large-scale citizen-science data, such as eBird, and trait databases, such as
global avian trait resources, provide promising material for such work.  A
particularly important direction is the extension to time-dependent potential
diversity, where climate warming, species migration, and habitat change induce
dynamic shifts in the diversity gap.

The proposed framework should therefore be viewed as a theoretical foundation.
Its role is to clarify what should be compared, under what constraints, and
with which diversity functional.  Once these choices are made explicit,
empirical biodiversity mapping can move from descriptive maps toward
benchmark-relative ecological diagnosis.

\bibliographystyle{plainnat-eguchi}
\bibliography{refs}

\end{document}